\newtheorem{proposition}{Proposition}
\definecolor{Gray}{gray}{0.9}
\definecolor{airforceblue}{rgb}{0.36, 0.54, 0.66}
\definecolor{aliceblue}{rgb}{0.94, 0.97, 1.0}
\definecolor{alizarin}{rgb}{0.82, 0.1, 0.26}
\definecolor{amber}{rgb}{1.0, 0.75, 0.0}
\definecolor{amber(sae/ece)}{rgb}{1.0, 0.49, 0.0}
\definecolor{awesome}{rgb}{1.0, 0.13, 0.32}
\definecolor{babypink}{rgb}{0.96, 0.76, 0.76}
\definecolor{bronze}{rgb}{0.8, 0.5, 0.2}
\definecolor{battleshipgrey}{rgb}{0.52, 0.52, 0.51}
\definecolor{bole}{rgb}{0.47, 0.27, 0.23}
\definecolor{bulgarianrose}{rgb}{0.28, 0.02, 0.03}
\definecolor{brinkpink}{rgb}{0.98, 0.38, 0.5}
\definecolor{cadet}{rgb}{0.33, 0.41, 0.47}
\definecolor{ceil}{rgb}{0.57, 0.63, 0.81}
\definecolor{cerulean}{rgb}{0.0, 0.48, 0.65}
\definecolor{charcoal}{rgb}{0.21, 0.27, 0.31}
\definecolor{coolblack}{rgb}{0.0, 0.18, 0.39}
\definecolor{coolgrey}{rgb}{0.55, 0.57, 0.67}
\definecolor{darkcandyapplered}{rgb}{0.64, 0.0, 0.0}
\definecolor{darkbrown}{rgb}{0.4, 0.26, 0.13}
\definecolor{darkcerulean}{rgb}{0.03, 0.27, 0.49}
\definecolor{darkgray}{rgb}{0.66, 0.66, 0.66}
\definecolor{darkgoldenrod}{rgb}{0.72, 0.53, 0.04}
\definecolor{darkjunglegreen}{rgb}{0.1, 0.14, 0.13}
\definecolor{darktaupe}{rgb}{0.28, 0.24, 0.2}
\definecolor{davy\'sgrey}{rgb}{0.33, 0.33, 0.33}
\definecolor{frenchblue}{rgb}{0.0, 0.45, 0.73}
\definecolor{almond}{rgb}{0.94, 0.87, 0.8}
\definecolor{beaublue}{rgb}{0.74, 0.83, 0.9}
\definecolor{beige}{rgb}{0.96, 0.96, 0.86}
\definecolor{bisque}{rgb}{1.0, 0.89, 0.77}
\definecolor{black}{rgb}{0.0, 0.0, 0.0}
\definecolor{fluorescentorange}{rgb}{1.0, 0.75, 0.0}
\definecolor{ghostwhite}{rgb}{0.97, 0.97, 1.0}
\definecolor{antiquewhite}{rgb}{0.98, 0.92, 0.84}
\definecolor{ao(english)}{rgb}{0.0, 0.5, 0.0}
\newtheorem{myAttack}{Attack}
\begin{document}

\title{Learning in the Dark: Privacy-Preserving Machine Learning using Function Approximation
\thanks{This work was funded by the Technology Innovation Institute (TII) for the project ARROWSMITH and from Horizon Europe for HARPOCRATES (101069535).}
}

\author{\IEEEauthorblockN{1\textsuperscript{st} Tanveer Khan}
\IEEEauthorblockA{\textit{ Department of Computing Sciences} \\
\textit{Tampere University}\\
Tampere, Finland \\
tanveer.khan@tuni.fi}
\and
\IEEEauthorblockN{2\textsuperscript{nd} Antonis Michalas}
\IEEEauthorblockA{\textit{ Department of Computing Sciences} \\
\textit{Tampere University, Finland} and\\
\textit{RISE Research Institutes of Sweden}\\
antonios.michalas@tuni.fi}
}

\maketitle

\begin{abstract}
Over the past few years, a tremendous growth of machine learning was brought about by a significant increase in adoption and implementation of cloud-based services. As a result, various solutions have been proposed in which the machine learning models run on a remote cloud provider and not locally on a user's machine. However, when such a model is deployed on an untrusted cloud provider, it is of vital importance that the users' privacy is preserved. To this end, we propose Learning in the Dark -- a hybrid machine learning model in which the training phase occurs in plaintext data, but the classification of the users' inputs is performed directly on homomorphically encrypted ciphertexts. To make our construction compatible with homomorphic encryption, we approximate the ReLU and Sigmoid activation functions using low-degree Chebyshev polynomials. This allowed us to build Learning in the Dark -- a privacy-preserving machine learning model that can classify encrypted images with high accuracy. Learning in the Dark preserves users' privacy since it is capable of performing high accuracy predictions by performing computations directly on encrypted data. In addition to that, the output of Learning in the Dark is generated in a blind and therefore privacy-preserving way by utilizing the properties of homomorphic encryption. 
\end{abstract}

\begin{IEEEkeywords}
 Activation Function, Homomorphic Encryption, Neural Networks, Polynomial Approximation, Privacy, 
\end{IEEEkeywords}

\section{Introduction}
\label{sec:intro}

Machine Learning (ML), specifically Deep Learning (DL), has garnered significant attention from researchers due to its solid performance in many tasks, such as speech recognition, spam detection, image classification, traffic analysis, face recognition, financial detection, and genomics prediction~\cite{rane2009secure, islam2011application, kim2015private, gilad2016cryptonets, dowlin2017manual, shortell2019secure}. 
To meet the growing demand for ML services, Cloud Service Providers (CSPs) such as Google Prediction API~\cite{Google}, Microsoft Azure ML~\cite{Microsoft}, and Ersatz Lab~\cite{Ersatz} also offer Machine Learning as a Service (MLaaS), enabling users to train and test the ML models using the CSP infrastructure. Typically, these models involve a training phase where the model learns from a dataset and a testing phase where the model predicts outputs based on unseen inputs. Once the model is trained and deployed on the CSP, the users can use it for online prediction services. 
However, the adoption of MLaaS raises concerns about the privacy of data being outsourced, in sensitive domains such as finance and healthcare~\cite{Michalas:14:Healthcom}. There is a risk of data misuse or theft when sending data to prediction models hosted by CSPs. To address these privacy concerns, researchers proposed various methods to protect user data in MLaaS settings~\cite{Patricia, garge2018neural, gilad2016cryptonets, khan2023split, khan2023love}. 

This work aims to demonstrate the application of Neural Network (NN) on Encrypted Data (ED) using Homomorphic Encryption (HE). HE allows performing arithmetic operations (addition and multiplication) over ED without decryption, enabling the homomorphic evaluation of functions relying on these operations. More specifically, our focus is to evaluate the Convolution Neural Network (CNN) on ED, where most operations, except for Non-linear Activation Functions (NLAF), can be homomorphically evaluated. 

Enabling the homomorphic evaluation of CNNs on ED has been an active area of research, with significant efforts dedicated to designing efficient support for NLAFs~\cite{khan2021blind}. Various approaches have been proposed, including the utilization of power functions~\cite{gilad2016cryptonets}, look-up table~\cite{crawford2018doing}, and polynomial approximations~\cite{hesamifard2016cryptodl, chabanne2017privacy, chou2018faster}. In this work, we employ low-degree Chebyshev polynomials to approximate NLAF.

\subsection{Background on Polynomial Approximations}
\label{subsec:polynomial approximation}
Approximating continuous functions is a problem that has drawn mathematicians' attention for a very long time. While there are several ways to approximate a continuous function, in this work we are only interested in polynomial approximations. More specifically, we are using Chebysev polynomials to approximate the Sigmoid and the ReLU functions. However, there are various works that use different approaches such as the $x^2$ function~\cite{gilad2016cryptonets}, the \textit{Piecewise} approximation~\cite{chabanne2017privacy}, lookup tables~\cite{crawford2018doing} etc. Unfortunately, all these methods face certain limitations. For example, the $x^2$ method can cause instability during the training phase and the creation of a piecewise linear approximation can sometimes be a complex optimization problem. With this in mind, we chose to work with Chebyshev polynomials. The general form of these polynomials is:  

\begin{equation}
\label{equ:chebyshev}
T_{n+1}(x)=2xT_{n}(x)-T_{n-1}(x)
\end{equation}
where $T_{n}(x)$ represents a polynomial of degree $n$. Chebyshev polynomials allow us to efficiently compute any continuous function in a given interval, using only low-degree polynomials. This feature significantly boosts efficiency and lower the overall computational complexity.

\subsection{Our Contribution}
\label{subsec:our contribution}
The main contributions of this paper are manifold. 

\begin{itemize}
	\item We show how to approximate NLAFs like ReLU and Sigmoid using Chebyshev polynomials. By substituting these NLAFs with the Chebyshev polynomials, we conduct a comprehensive analysis to compare the differences in terms of efficiency and accuracy.  
	\item We design a PPML model in which the CNN is trained on plaintext data while the classification process operates on homomorphically ED.
	\item To illustrate the effectiveness of our model, we conducted extensive experiments and provided a comparative analysis with other state-of-the-art works in the field of PPML.
	\item We designed a protocol that demonstrate the practical application of our PPML model in a realistic scenario while ensuring its security under malicious threat model. 	
\end{itemize}

\subsection{Organization}
\label{subsec:organization}
The rest of the paper is organized as follows: In Section~\ref{sec:relatedwork}, we present important published works in the area of PPML. In Section~\ref{sec:preliminaries}, we provide the necessary background information needed for our construction. Then, in Section~\ref{sec:chebyshev approximation}, we show how to approximate the ReLU and Sigmoid AFs using low-degree Chebysev polynomials. The methodology of our work is illustrated in Section~\ref{sec:methodology}, followed by extensive experimental results in Section~\ref{sec:performance analysis}. In Section~\ref{sec:PPMLProtocol}, we design a protocol, that demonstrates the applicability of our work and finally, in Section~\ref{sec:Impact} we conclude the paper.

\section{Related Work}
\label{sec:relatedwork}

The first step in preserving the privacy of the ML model is achieved through Multiparty Computation (MPC). This approach allows parties jointly compute a function while keeping the original inputs private. Several methods based on MPC have been proposed for preserving the privacy of ML models, such as K-means clustering, linear regression, SVM classifier, Decision tree, etc.~\cite{bunn2007secure, sanil2004privacy, lindell2000privacy, vaidya2008privacy, slavkovic2007secure}. 



One approach called SecureML, designed by Mohassel \textit{et al.,}~\cite{mohassel2017secureml}, uses a two-server model in which data is distributed among two non-colluding servers. It is an efficient protocol for preserving the privacy of various ML models using MPC. These servers train various models on the joint data using secure MPC with support for approximating Activation Functions (AF) during training. Since SecureML requires changes in the training phase, the model does not apply to the problem of making the existing NN model oblivious. Another approach MiniONN~\cite{liu2017oblivious}, converts any NN into an oblivious NN using MPC providing privacy-preserving predictions. While MiniONN uses cryptographic primitives, such as garbled circuits and secret sharing, it still reveals information about the network (e.g. size of the filter)~\cite{juvekar2018gazelle}. MOBIUS is another secure prediction protocol for binarized NN~\cite{kitai2019mobius}, allowing fast and scalable PPML model by delegating a protected model to a resource provider. The resource provider offers prediction to client without knowing anything about client's input.  

Due to the high communication cost associated with MPC techniques mentioned above, alternative methods using HE have been explored. Wu \textit{et al.}~\cite{wu2013privacy} proposed a privacy-preserving logistic regression model. As the logistic function is \textit{not} linear, the authors use polynomial fitting to achieve a good approximation. However, it lowers the accuracy of model. Graepel \textit{et al.}~\cite{graepel2012ml} used Somewhat Homomorphic Encryption (SHE)~\cite{fan2012somewhat} to train two simple classifiers 
on ED, employing low-degree polynomials for efficient computations. 

Ehsan \textit{et al.}~\cite{hesamifard2017cryptodl} proposed a technique based on Leveled HE (LHE)~\cite{brakerski2014leveled} to preserve the privacy of CNN while at the same time keep the accuracy as close as possible to the original model. They approximated the Sigmoid, ReLU and Tanh functions and achieved an accuracy of~99.52\% on MINST dataset~\cite{lecun-mnisthandwrittendigit-2010}. This is good, as the accuracy of the original model was measured at~99.56\%. Unfortunately, their approach is computationally expensive, as the training and testing phases are both performed on ED.

In~\cite{bourse2018fast}, the authors present Fast Homomorphic Evaluation of Deep Discretized NN (FHE-DiNN). Their design utilizes HE to evaluate an NN. The user encrypts the data using HE and transfers it to the cloud. The cloud blindly classifies the ED using HE and sends the ED back to the user. Upon reception, the user uses her secret key to decrypt it. In this scheme, the encryption parameters are dependent on the model structure. So, if the server updates its model, the client is forced to re-encrypt all of its data. While communication-wise HE schemes are very efficient, the computation cost at the server-side is very large.




A notable related work is CryptoNets~\cite{gilad2016cryptonets} which applies an NN model to ED. While CryptoNets achieves remarkable accuracy, the construction is based on the use of square AF. Hence, approximating a non-linear function causes instability during the training phase when the interval\footnote{By interval we mean the domain of definition of the AF.} is large.  In our work, we address this issue by using Chebyshev approximation, which accurately approximates AFs even in larger intervals. Additionally, we adapt an approach where the client's input is encrypted but the model remains in plaintext, aiming for better efficiency in the classification process.

\section{Preliminaries}
\label{sec:preliminaries}

\paragraph*{\textbf{Notation}} If $x$ and $y$ are two strings, by $x||y$ we denote the concatenation of $x$ and $y$. A \textit{probabilistic polynomial time} (PPT) adversary $\mathcal{ADV}$ is a randomized algorithm for which there exists a polynomial $p(\cdot)$ such that for all input $x$, the running time of $\mathcal{ADV}(x)$ is bounded by $p(|x|)$. A neuron is a mathematical function that  takes one or more inputs, multiplies them by some values called ``weights'' and adds them together. This value is then passed to NLAF, to become neuron's output.

\subsection{Convolutional Neural Network (CNN)}
\label{subsec:convolutional neural network}
A typical NN is a combination of neurons arranged in layers. Each neuron receives input from other neurons with an associated weight $w$ and a bias $b$, as shown in~\autoref{fig:neural network}. It then uses equation~\ref{equ:simple neural network} to compute some function $f$ on the weighted sum of its input. The output of this neuron is given as input to other neurons. 

\begin{equation}
	y = f \left( \sum_{i=1}^{3} {x_{i}w_i}+b \right) 
	\label{equ:simple neural network}
\end{equation}

In equation~\ref{equ:simple neural network}, $x_i$ is the input, $w_i$ is the weight, $b$ is the bias term and $f$ is the AF. 
  
\begin{figure}
	\resizebox{9cm}{3.5cm}{%
		\begin{tikzpicture}[
		init/.style={
			draw,
			circle,
			inner sep=2pt,
			font=\Huge,
			join = by -latex
		},
		squa/.style={
			draw,
			inner sep=2pt,
			font=\Large,
			join = by -latex
		},
		start chain=2,node distance=13mm
		]
		\node[on chain=2] 
		(x2) {$x_2$};
		\node[on chain=2,join=by o-latex] 
		{$w_2$};
		\node[on chain=2,init] (sigma) 
		{$\displaystyle\Sigma$};
		\node[on chain=2,squa,label=above:{\parbox{2cm}{\centering Activation \\ function}}]   
		{$f$};
		\node[on chain=2,label=above:Output,join=by -latex] 
		{$y$};
		\begin{scope}[start chain=1]
		\node[on chain=1] at (0,1.5cm) 
		(x1) {$x_1$};
		\node[on chain=1,join=by o-latex] 
		(w1) {$w_1$};
		\end{scope}
		\begin{scope}[start chain=3]
		\node[on chain=3] at (0,-1.5cm) 
		(x3) {$x_3$};
		\node[on chain=3,label=below:Weights,join=by o-latex] 
		(w3) {$w_3$};
		\end{scope}
		\node[label=above:\parbox{2cm}{\centering Bias \\ $b$}] at (sigma|-w1) (b) {};
		
		\draw[-latex] (w1) -- (sigma);
		\draw[-latex] (w3) -- (sigma);
		\draw[o-latex] (b) -- (sigma);
		
		\draw[decorate,decoration={brace,mirror}] (x1.north west) -- node[left=10pt] {Inputs} (x3.south west);
		\end{tikzpicture}}
	\caption{Structure of a Neuron in a Neural Network} \label{fig:neural network}
\end{figure}
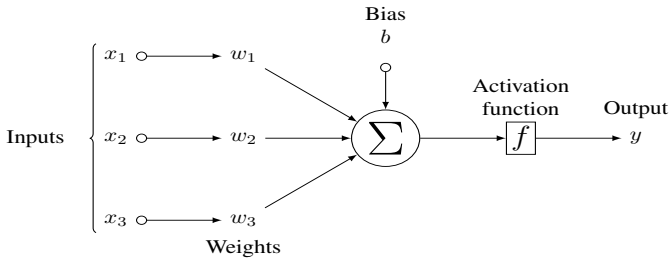

In our work, we focus on CNN, a deep NN algorithm primarily used for \textit{image classification}. In CNN, each input passes through a series of layers during the training and testing phases\footnote{\url{https://shorturl.at/nzHK1}}. These layers consist of convolutional layers (Conv), AFs, pooling layers, Fully Connected (FC) layers and a softmax layer~\cite{albawi2017understanding}:

\begin{figure}[h]
	\centering
	\includegraphics[width=\linewidth]{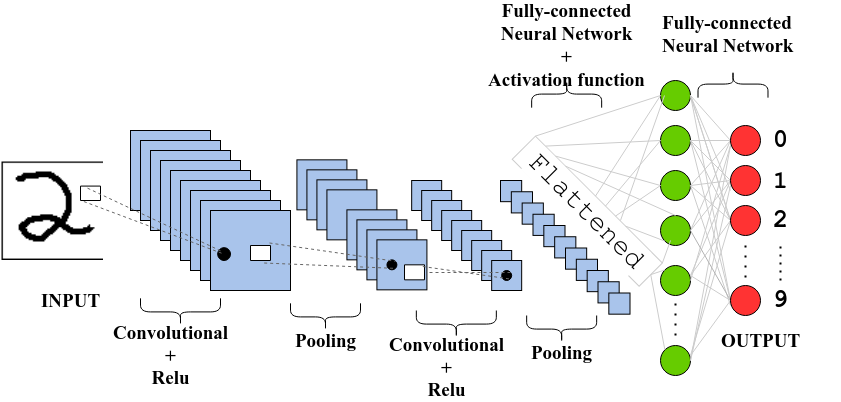}
	\caption{Convolutional Neural Network}
	\label{fig:Convolutional NN}
\end{figure}

\begin{itemize}
	\item \underline{Convolution Layer}: Conv is the first layer in CNN and acts as a feature detector (used for feature mapping). To generate a feature map, convolution is performed by moving the filter over the input with a certain stride. On a single input, multiple convolutions can be performed using numerous filters to extract more than one feature from the input. Also, padding is performed to make size of convolved features same as that of the input. 


	\item \underline{Activation Function}: In NN, all operations are linear except the AF. These functions are used to introduce non-linearity in the network. The most commonly used AFs are Sigmoid, ReLU and Tanh, as shown in the~\autoref{tab:diff activation ftn}. 
	
	\item \underline{Pooling}: This layer is responsible for extracting the dominant features (maximum or average pixel values) to reduce the size of the input image. The popular pooling operations are \textit{max-pooling} and \textit{average-pooling}. In max-pooling, the maximum value, and in average pooling, the average values are extracted from the part of the image covered by the filter.
	
	\item \underline{Flattening}: It convert data into a 1-dimensional array that is given as input to the next layer. There, the image matrix is converted into a vector and feed to a FC NN.
	
	\item \underline{Fully Connected}: The FC layers are activated at the last phase of the process after Conv, pooling and AFs. This layer connects every neuron in one layer to every neuron in the next layer and performs a weighted sum of the inputs and add bias.
	
	\item \underline{Softmax}: In a classification problem, softmax is the final output layer with discrete class labels. It assigns a probability to each class that adds up to 1. The node having the highest probability is determined to be the most likely class for the given input.

\end{itemize}

Our CNN (~\autoref{fig:Convolutional NN}) consists of two Conv with a ReLU AF, two pooling operations, two FC layers and a softmax layer.

\begin{table}[h]
	\centering
		\caption{Different Activation Functions.}
	\label{tab:diff activation ftn}
 \scriptsize
	\begin{tabular}{|p{10mm}|p{15mm}|p{15mm}|p{20mm}|}
		\hline
		\multicolumn{1}{|l|}{Name} & \multicolumn{1}{l|}{Function} & \multicolumn{1}{l|}{Derivative} & \multicolumn{1}{l|}{Figure} \\ 
		\hline
		ReLU & $f(x) =\begin{cases}
		0 & ~\text{if}~ x<0 \\ 
		x & ~\text{if}~x \geq 0.
		\end{cases}$ & $f'(x)=\begin{cases}
		0 & ~\text{if}~ x<0 \\ 
		x & ~\text{if}~1 \geq 0.
		\end{cases} $ & 
		\begin{tikzpicture}[baseline={(0,0.5)}]
		\draw (-1,0) -- (1,0);
		\draw (0,0) -- (0,1);
		\draw plot[domain=-1:1,variable=\x] ({\x},{ifthenelse(\x<0,0,\x)});
		\end{tikzpicture}\\Sigmoid & $f(x)=\frac{1}{1+e^{-x}}$ & $f'(x)=f(x)(1-f(x))^2$  &  
		\begin{tikzpicture}[baseline={(0,0.2)}]
		\draw (-1,0) -- (1,0);
		\draw (0,0) -- (0,1);
		\draw plot[domain=-1:1,variable=\x] ({\x},{1/(1+exp(-4*\x))});
		\end{tikzpicture}\\
		\\
		Tanh & $f(x)=\frac{e^x-e^{-x}}{e^z+e^{-z}} $ & $f'(x)=1-f(x)^2$   
		&  \begin{tikzpicture}[baseline={(0,0)}]
		\draw (-1,0) -- (1,0);
		\draw (0,-1) -- (0,1);
		\draw plot[domain=-1:1,variable=\x] ({\x},{tanh(4*\x)});
		\end{tikzpicture} \\
		\hline                          
	\end{tabular}
\end{table}

\subsection{Homomorphic Encryption (HE)}
\label{subsec:homomorphic encryption}
HE is an encryption scheme that allows users to perform computations on ED. Given two ciphertexts $c$ and $c'$, a user can compute $f(c,c')$ where $f$ is a function associated either with an addition or multiplication operation. A typical HE scheme consists of the following four algorithms: 


\begin{itemize}
	\item $\mathsf{HE.KeyGen}(1^{\lambda}) \rightarrow \mathsf{(pk, sk)}$: The key generation algorithm takes as input a security parameter $\lambda$ and outputs a public/private key pair ($\mathsf{pk}$, $\mathsf{sk}$).
	
	\item $\mathsf{HE.Enc(pk, }m) \rightarrow c$: This algorithm takes as input a $\mathsf{pk}$ and a message $m$ and outputs a ciphertext $c$.
	
	\item $\mathsf{HE.Eval(pk,}f,c,c^{'}) \rightarrow c_{eval}$: This algorithm takes as an input two ciphertexts $c$ and $c^{'}$, a $\mathsf{pk}$ and a homomorphic function $f$ and outputs an evaluated ciphertext $c_{eval}$. 
	
	\item $\mathsf{HE.Dec(sk, }c)\rightarrow m$: The decryption algorithm takes as input a private key $\mathsf{sk}$ and $c_{eval}$ and outputs $f(m, m')$.  
\end{itemize} 

Currently, there are three different kinds of HE schemes; Partial HE (PHE), Fully HE (FHE) and somewhat HE (SHE). 
PHE allows users to perform an \textit{unlimited number} of operations on the ciphertexts~\cite{paillier1999public, elgamal1985public}.  However, they support only one type of operation (either addition or multiplication) and hence, are not suitable for our work.  Furthermore, while FHE schemes offer the possibility to perform an unlimited number of both additions  and multiplications~\cite{gentry2009fully}, they are computationally expensive~\cite{wang2013exploring}. As a result, we choose to work with SHE that offers similar functionalities as FHE but in a more efficient manner~\cite{fan2012somewhat, acar2018survey}. The key difference between FHE and SHE is that in SHE schemes users can only perform a limited number of operations.

\section{Chebyshev Polynomials}
\label{sec:chebyshev approximation}

In this section, we show how low-degree Chebyshev polynomials can be utilized to approximate the AFs. As mentioned in~\cite{atkinson2005functional}, using these polynomials the AFs can be approximated at a given interval. The first few Chebyshev polynomials are given below while their generalization is given in equation~\ref{equ:chebyshev}.


\begin{align*} 
	T_{0}(x)=1,  
	T_{1}(x)=x, 
	T_{2}(x)=2x^{2}-1, 
	T_{3}(x)=4x^{3}-3x
\end{align*}

 

Chebyshev approximation is also known as the minimax approximation.  The minimax polynomial approach is used for function approximation by improving the accuracy and lowering the overall computational complexity~\cite{schlessman2002approximation}. Instead of minimizing the error at the point of expansion like Taylor's polynomial approximation, the minimax approach minimizes the error across a given input segment. The minimax approximation is used to find a mathematical function that minimizes the maximum error. As an example, for a function $f$ defined over the interval $[a, b]$, the minimax approximation finds a polynomial $p(x)$ that minimizes $\underset{a\le x \le b}{max}|f(x)-p(x)|$.

The first order minimax polynomial is defined as: 

\begin{align*}
	p(x)=c_0+c_1{x} \approx f(x)
\end{align*}

where $c_0$ and $c_1$ are the coefficients of the polynomial. 
%
%
%

\subsection{Chebyshev Approximation}
\label{subsec:chebyshev approx}


To approximate a continuous function $f$, defined over $[a, b]$, we first need to express $f$ as a series of Chebyshev polynomials at $[-1, 1]$. More precisely, $f$ is expressed as: $f(x) = \sum_{k=0}^{n}c_{k}T_{k}(x), \ \ x \in [-1, 1]$, where $c_{k}$ is the Chebyshev coefficient and $T_{k}(x)$ can be calculated from equation~\ref{equ:chebyshev}. As a next step, we calculate the coefficients of the polynomial and finally, express the polynomial in the original interval $[a, b]$. This procedure is illustrated in algorithm~\ref{alg:cheb poly approx}.

\begin{algorithm}
\scriptsize
	\DontPrintSemicolon
	
	\KwInput{$c_{k}, f(x), T_{k}(x)$}
	\KwOutput{$p(x)$}
	Express $f$ as: $f(x) = \sum_{k=0}^{n}c_{k}T_{k}(x), \ \ x \in [-1, 1]$\\
	Chebyshev coefficients $c_{k}$ = $\frac{2}{\pi} \int_{-1}^{1} f(x) \frac{T_{k}(x)}{\sqrt{1-x^{2}}}$\ \\
	Approximation domain: from [-1, 1] to [a, b]: $x=\frac{a+b-2z}{b-a}$, $z \in [a, b]$
	\caption{Chebyshev Polynomial Approximation}
	\label{alg:cheb poly approx}
\end{algorithm}

Our results for approximating Sigmoid and ReLU, using algorithm~\ref{alg:cheb poly approx}, are illustrated in~\autoref{tab:approximate sig and relu}. The approximation error for both AFs is calculated using equation $E(x)= f(x) - p(x)$.


\begin{table}[!ht]
\centering
		\caption{Approximating Sigmoid and ReLU Activation Function}
	\label{tab:approximate sig and relu}
 \scriptsize
	\begin{tabular}{|l |l|l|l|l|}
		\hline
        \rowcolor{Gray}
		\multicolumn{5}{|l|}{
  {\ \ \ \ \ \ \ \ \ \ \ \ \textbf{Approximation: Sigmoid}}}                \\ \hline
		x  & Interval    & Function(x) & Approximation & Difference \\ \hline
		-4 & {[}-5, 5{]} & 0.017986    & 0.016360      & -1.63e-03  \\ \hline
		-3 & {[}-5, 5{]} & 0.047426    & 0.049098      & 1.67e-03   \\ \hline
		-2 & {[}-5, 5{]} & 0.119203    & 0.118340      & -8.63e-04  \\ \hline
		-1 & {[}-5, 5{]} & 0.268941    & 0.268522      & -4.19e-04  \\ \hline
		1  & {[}-5, 5{]} & 0.731059    & 0.731478      & 4.19e-04   \\ \hline
		2  & {[}-5, 5{]} & 0.880797    & 0.881660      & 8.63e-04   \\ \hline
		3  & {[}-5, 5{]} & 0.952574    & 0.950902      & -1.67e-03  \\ \hline
		4  & {[}-5, 5{]} & 0.982014    & 0.983640      & 1.63e-03   \\ \hline
		\rowcolor{Gray} \multicolumn{5}{|l|}{
  {\ \ \ \ \ \ \ \ \ \ \ \ \textbf{Approximation: ReLU}}}                   \\ \hline
		-4 & {[}-5, 5{]} & 0.000000    & -0.008871     & -8.87e-03  \\ \hline
		-3 & {[}-5, 5{]} & 0.000000    & 0.014340      & 1.43e-02   \\ \hline
		-2 & {[}-5, 5{]} & 0.000000    & -0.015085     & -1.51e-02  \\ \hline
		-1 & {[}-5, 5{]} & 0.000000    & -0.026883     & -2.69e-02  \\ \hline
		1  & {[}-5, 5{]} & 1.000000    & 0.973117      & -2.69e-02  \\ \hline
		2  & {[}-5, 5{]} & 2.000000    & 1.984915      & -1.51e-02  \\ \hline
		3  & {[}-5, 5{]} & 3.000000    & 3.014340      & 1.43e-02   \\ \hline
		4  & {[}-5, 5{]} & 4.000000    & 3.991129      & -8.87e-03  \\ \hline
	\end{tabular}
\end{table}

%

\section{Methodology}
\label{sec:methodology}

We start this section by describing our system model. We assume a client-server model involving the following entities:

\begin{itemize}
\item \textit{Users}: We consider users who own a list of images and wish to use a cloud-based ML service to classify them in a privacy-preserving way (i.e. without revealing anything about the content of the images to the cloud). 
\item \textit{Cloud Service Provider (CSP)}: The CSP can receive a large number of \textit{encrypted} images from  users and classify them in a privacy-preserving way by giving them as input to a ML algorithm. 
\end{itemize}

The topology of our work is illustrated in~\autoref{fig:client server model}.



\begin{figure}[h]
	\centering
	\includegraphics[width=\linewidth, frame=1pt]{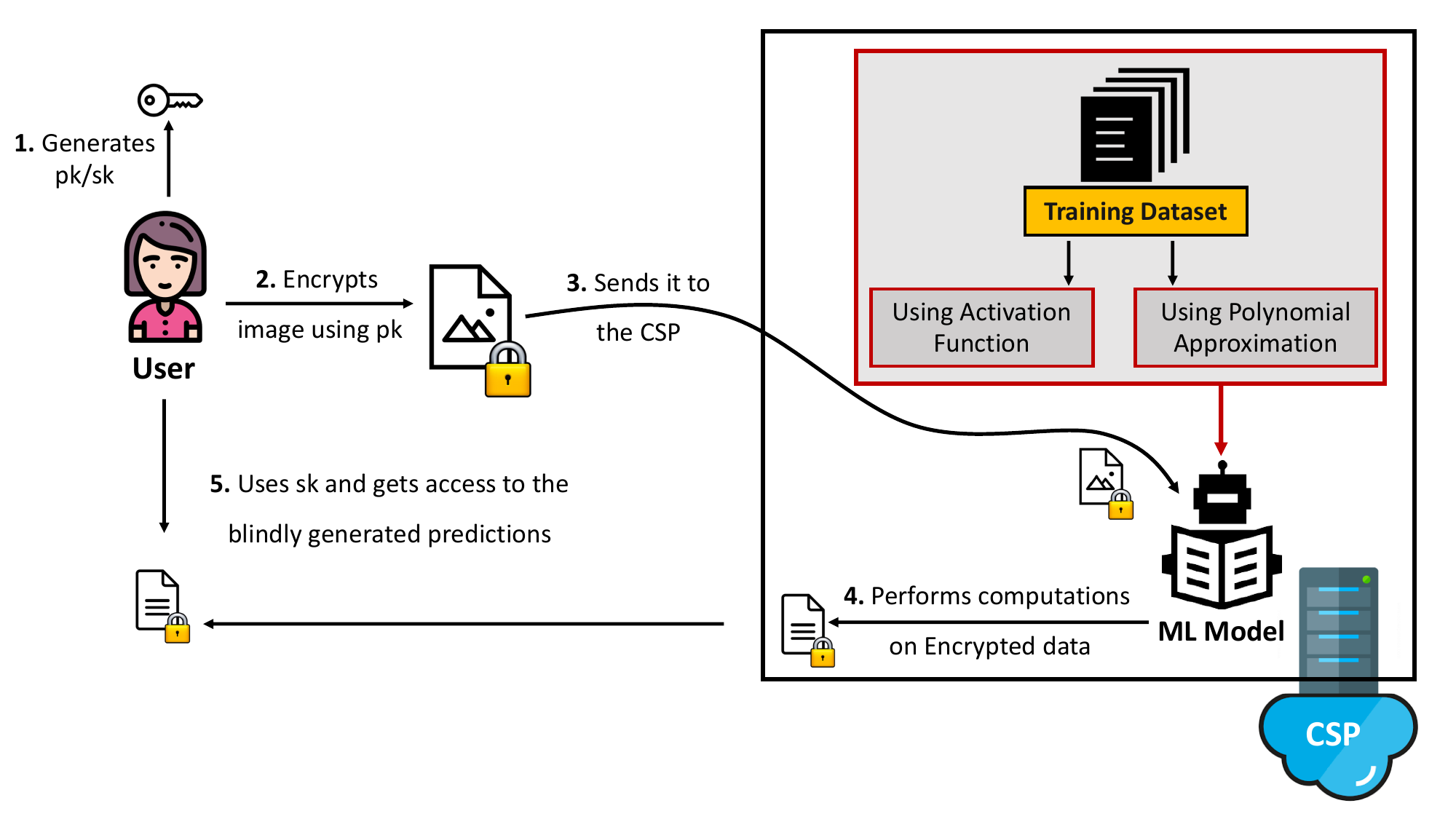}
	\caption{Learning in the Dark High Level Overview}
	\label{fig:client server model}
\end{figure}

In our model, we consider a CNN capable of analyzing large volumes of data (images) in a variety of domains. The CNN is deployed in a privacy-preserving manner in the CSP. 
To preserve the privacy of users data, we use HE. Using an HE scheme allows us to perform computations on ED. However, HE schemes face certain limitations as they only support addition and multiplication operations. Most of the operations in a CNN are simple additions and multiplications and can thus be evaluated using HE. However, AFs are non linear and as a result, we cannot use HE to perform operations on them. To this end, we replace the AFs with polynomial approximations as already discussed in section~\ref{sec:chebyshev approximation}. While higher degree polynomials would provide us better approximation, they also introduce higher computation and communication costs and hence, render our construction inefficient.

\paragraph*{\textbf{Flow}} The CNN model is deployed in the CSP and is trained using plaintext data. The weights and biases for this model are measured and made available to the CSP. For the training phase, we use the CNN given in~\autoref{fig:convolutional training}. The user generates a public/private key pair for the HE scheme, encrypts an image and sends it to the CSP. Upon reception, the CSP runs the ML model and performs the classification in a privacy-preserving way. 


\begin{figure}
\centering
\scriptsize
	\fbox{\begin{minipage}{25em}
			\begin{enumerate}
				\item \textit{\underline{Conv}}: Input image 28 $\times$ 28, Window size 5 $\times$ 5, Stride(1,1), Number of input channels 1, number of output channels 5, Filters 5, Output 28 $\times$  28 $\times$  5.
				\item \textit{\underline{Activation Function}}: ReLU .
				\item \textit{\underline{Pooling}}: Mean, Window size 2 $\times$ 2 $\times$ 1, Stride(1,1), Output 14 $\times$ 14 $\times$ 5.
				\item \textit{\underline{Conv}}: Window size 5 $\times$ 5, Stride(1,1), Filters 10, Output 14 $\times$ 14 $\times$ 10.
				\item \textit{\underline{Activation Function}}: ReLU .
				\item \textit{\underline{Pooling}}: Mean, Window size 2 $\times$ 2 $\times$ 1, Stride(1,1), Output 7 $\times$ 7 $\times$ 10.
				\item \textit{\underline{Fully Connected}}: Fully connects the incoming 7 $\times$ 7 $\times$ 10 nodes to the outgoing 128 nodes.
				\item \textit{\underline{Activation}}: ReLU 
				\item \textit{\underline{Fully Connected}}: Fully connects the incoming 128 nodes to the outgoing 10 nodes.
				\item \textit{\underline{Softmax}}: Generate a probability for 10 nodes
			\end{enumerate}
	\end{minipage}}
	\caption{Convolutional Neural Network for Training Phase }
	\label{fig:convolutional training}
\end{figure}

\subsection{Inference Phase}
\label{subsec:Inference}
Although, the operations performed in the inference phase are nearly the same as in the training phase, there are few fundamental differences. For example, all operations in the inference phase are taking place on ED while in contrast to the training phase where plaintext data is used. Similarly, the softmax which is part of the training phase is no longer available in the inference phase as shown in~\autoref{fig:convolutional inference}. 

For the inference phase, we use the Fan-Vercauteren SHE scheme~\cite{fan2012somewhat}. 
The reason for using this specific scheme is that it allows us to perform \textit{both} addition and multiplication. It is important to note that this scheme has three important parameters that affect the security level, and its performance: 

\begin{itemize}
	\item \underline{Polynomial Modulus}: This is an important parameter that affects the security level of the scheme. Polynomial modulus uses a power of two cyclotomic polynomial~\cite{thangadurai2000coefficients} and the recommended degrees for these polynomials are~1024, 2048, 4096, 8192 and beyond. On one side, a higher degree gives more security to the scheme while on the other side it degrades its performance.
	
	\item \underline{Coefficient Modulus}: This parameter determines the Noise Budget (NB) in the encrypted ciphertext. The coefficient modulus is directly proportional to NB and inversely proportional to the security level of the scheme.
	
	\item \underline{Plaintext Modulus}: The plaintext modulus affects NB in the freshly encrypted ciphertext. Additionally, it affects the NB consumption of homomorphic multiplications. For good performance, the recommendation is to keep the plaintext modulus as small as possible.
\end{itemize}

Each ciphertext in this encryption scheme has a specific quantity called NB -- measured in bits. The NB is determined by the above parameters and consumed by the homomorphic operations. The consumption of the NB is based on the chosen encryption parameters. For addition operations, this budget consumption is almost negligible in comparison to multiplication operation. In sequential multiplication that occurs at the Conv and FC layer, the consumption of NB is very high. Hence, it is important to reduce the multiplicative depth of the circuit by considering appropriate encryption parameters. Once the NB drops to zero, then the decryption of ciphertext is not possible. Therefore it is necessary to choose the parameters to be large enough to avoid this, but not so large that it becomes ineffective and non functional. 

While the HE scheme is based on polynomials, user's input is provided as a real number. Therefore, there is a clear mismatch between the two. Hence, it is important to use an encoding scheme that maps one to the other. To this end, the user encodes the input using the plaintext modulus and then encrypts it using the public key. The user generates encryption parameter and shares it with CSP. To perform computations on ED, CSP must have access to these parameters. 

Using calculated weights and biases from the training phase and encryption parameters, CSP runs the inference phase on encrypted image. The inference network is same as training network except that AFs are replaced by polynomial approximation and are built using an HE function. 

The AFs are substituted by polynomials. Since these polynomials only have addition and multiplication operations that are supported by HE. Consequently, we can perform encrypted computations on these functions. Similarly, pooling operation in the inference phase is straightforward -- calculate the average value of four ciphertexts and multiply it with appropriate values. However, Conv is a bit expensive in terms of NB as it is a sequence of multiplication operations.

Softmax is not a part of the inference network and CSP use it to perform computation on ED and obtains an encrypted output. The CSP does not have access to the secret key and thus cannot access the result. Furthermore, as the softmax layer is removed from the inference network the CSP is not able to predict the final output of the layer. 

At the end, the encrypted result of the output layer -- an array of 10 values which are homomorphically encrypted -- is sent back to the user. The user decrypts the results using the secret key and finds the output of the model which is the index corresponding to the highest among the 10 values.

\smallskip

At this point, it is important to highlight that the user utilized the ML model offered by the CSP and received the results without getting any valuable information about the underlying model. Similarly, the CSP ran the model on the encrypted image but at the same time was unable to extract any valuable information either for the content of the image or the actual prediction that sent back to the user. Hence, our model is considered as a privacy-preserving one. 

\begin{figure}
    \scriptsize
    \centering
	\fbox{\begin{minipage}{25em}
			\begin{enumerate}
				\item \textit{\underline{Conv}}: Input image 28 $\times$ 28, Window size 5 $\times$ 5, Stride(1,1), Number of input channels 1, number of output channels 5, Filters 5, Output 28 $\times$  28 $\times$  5.
				\item \textit{\underline{Activation Function}}: Approximated using polynomial approximation.
				\item \textit{\underline{Pooling}}: Mean, Window size 2 $\times$ 2 $\times$ 1, Stride(1,1), Output 14 $\times$ 14 $\times$ 5.
				\item \textit{\underline{Conv}}: Window size 5 $\times$ 5, Stride(1,1), Filters 10, Output 14 $\times$ 14 $\times$ 10.
				\item \textit{\underline{Activation Function}}: Approximated using polynomial approximation.
				\item \textit{\underline{Pooling}}: Mean, Window size 2 $\times$ 2 $\times$ 1, Stride(1,1), Output 7 $\times$ 7 $\times$ 10.
				\item \textit{\underline{Fully Connected}}: Fully connects the incoming 7 $\times$ 7 $\times$ 10 nodes to the outgoing 128 nodes.
				\item \textit{\underline{Activation}}: Approximated using polynomial approximation. 
				\item \textit{\underline{Fully Connected}}: Fully connects the incoming 128 nodes to the outgoing 10nodes.
			\end{enumerate}
	\end{minipage}}
	\caption{Convolutional Neural Network for Inference Phase }
	\label{fig:convolutional inference}
\end{figure}

\section{Performance Evaluation}
\label{sec:performance analysis}
We present our experimental results. In the first part, we provide experimental results on function approximation using Chebyshev polynomials. Then, we evaluate the performance of the proposed ML model and compare it with CryptoNETs. 


\paragraph{\textbf{Experimental Setup}} 
All experiments were conducted in Python~3 using Ubuntu~18.04 LTS~64 bit (Intel Core~i7,~2.80 GHz,~32GB). For the training phase, we used \href{https://www.tensorflow.org/}{Tensor flow} to train our CNN model, while the actual experiments for that phase were conducted on \href{https://colab.research.google.com/}{Google Colab} (with GPU enabled).
Finally, for the inference phase we used Microsoft's Simple Encrypted Arithmetic Library (SEAL)~\cite{sealcrypto}. 


\paragraph{\textbf{Dataset}} To evaluate our model, similar to other works in the area, we used the MNIST dataset~\cite{lecun-mnisthandwrittendigit-2010} which consists of~60,000 images of handwritten digits. To train our CNN model we used 50,000 images while the rest 10,000 were used for testing. Each image is $28 \times 28$ pixel array and is represented by its gray level in the range of~0-255.


\subsection{Activation Function Approximation}
\label{subsec:ActiFuncApprox}

As we mentioned in the previous sections, in our approach we use Chebyshev polynomials to approximate the ReLU and Sigmoid AFs where inputs are images encrypted with an SHE scheme. The polynomial approximation of the ReLU AF is shown in~\autoref{tab:relu-polyapprox_degree} while for Sigmoid in~\autoref{tab:sig-polyapprox_degree}. Since the choices of the degree and the interval affect the performance of the model, it is necessary to choose suitable parameters. For this purpose, we conducted a series of experiments using different degrees and intervals. As can be seen in~\autoref{tab:relu-polyapprox_degree}, the AFs are more accurately approximated when using polynomials of higher degree in small intervals. For example, the polynomial having degree~9 and interval $[-10, 10]$ more accurately approximate the ReLU function than the rest of the polynomials. The same applies to the Sigmoid AF, where a high degree~9  and small interval $[-10, 10]$ give a better approximation as can be seen in~\autoref{tab:sig-polyapprox_degree}. However, the use of higher degree polynomials introduces a significant computation overhead, and small intervals limit the use of the approximation function. 
The results for approximating the Sigmoid AF using low-degree Chebyshev polynomials are presented in~\autoref{tab:sig-polyapprox_degree}.

		\begin{table*}[!ht]
  \centering
			\caption{Polynomial Approximation of the ReLU Function on Two Intervals ([-10,10], [-100,100]) using Different Degrees} 
			\label{tab:relu-polyapprox_degree}
            \scriptsize
			\scalebox{0.8}{
			\begin{tabular} {|p{9mm}|p{15mm}|>{\small}p{90mm}|p{25mm}|}  \hline
                \rowcolor{Gray}
				Degree & Interval &Polynomial Approximation & ReLU Function \\ \hline
				
				3 &\(\displaystyle [-10, 10] \) &\(\displaystyle (-4.44089209850063e-18) \times x^3 + (0.038268343236509) \times x^2 + (0.5) \times x + 1.35299025036549\) & \parbox[c]{1em}{
					\includegraphics[width=1in]{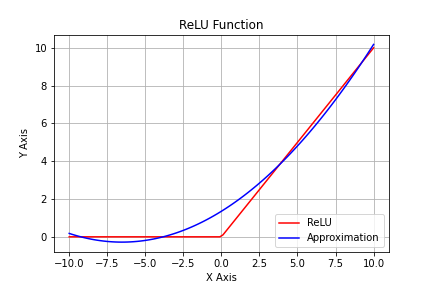}} \\ \hline 
				
				5 &\(\displaystyle [-10, 10] \) & \(\displaystyle (2.368475785867e-19) \times x^5 - (0.000252624921308674) \times x^4 - (2.90138283768708e-17) \times x^3 + (0.0660873211772537) \times x^2 + (0.5) \times x + 0.862730150341736 \) & \parbox[c]{1em}{
					\includegraphics[width=1in]{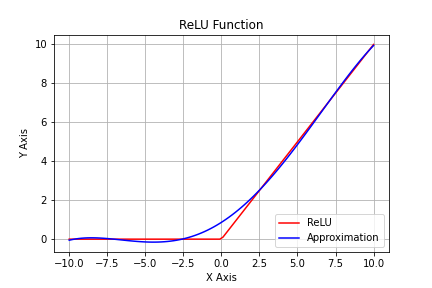}} \\ \hline 
				
				7 &\(\displaystyle [-10, 10] \) & \(\displaystyle (-8.88178419700125e-21) \times x^7 + (3.66197231323541e-6) \times x^6 + (1.33226762955019e-18) \times x^5 - (0.000847927183186682) \times x^4 - (5.24025267623074e-17) \times x^3 + (0.0920352084972136) \times x^2 + (0.5) \times x + 0.637244473880199 \) & \parbox[c]{1em}{
					\includegraphics[width=1in]{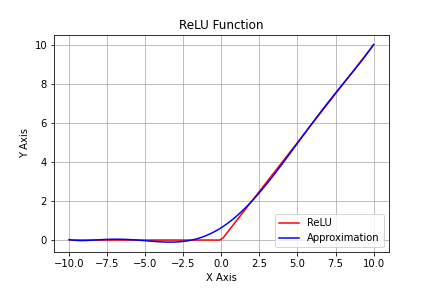}} \\ \hline 
				9 &\(\displaystyle [-10, 10] \) & \(\displaystyle (1.15960574476048e-21) \times x^9 - (7.03111115816643e-8) \times x^8 - (2.41868747252738e-19) \times x^7 + (1.87324195121527e-5) \times x^6 + (1.66338054441439e-17) \times x^5 - (0.00189480875502865) \times x^4 - (4.21263024463769e-16) \times x^3 + (0.117284304779533) \times x^2 + (0.5) \times x + 0.506232562894004 \) & \parbox[c]{1em}{
					\includegraphics[width=1in]{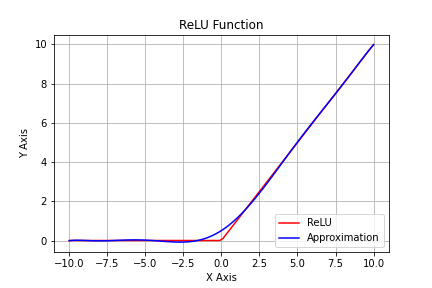}} \\ \hline 
				
				3 &\(\displaystyle [-100, 100] \) & \(\displaystyle (-4.2632564145606e-20) \times x^3 + (0.0038268343236509) \times x^2 + (0.5)\times x + 13.5299025036549 \) & \parbox[c]{1em}{
					\includegraphics[width=1in]{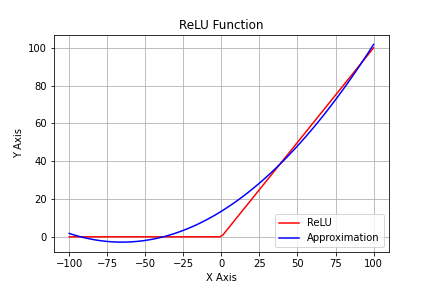}} \\ \hline 
				
				5 &\(\displaystyle [-100, 100] \) & \(\displaystyle (2.27373675443232e-23) \times x^5 - (2.52624921308674e-7) \times x^4 - (2.70006239588838e-19) \times x^3 + (0.00660873211772537) \times x^2 + (0.5)\times x + 8.62730150341737 \) & \parbox[c]{1em}{
					\includegraphics[width=1in]{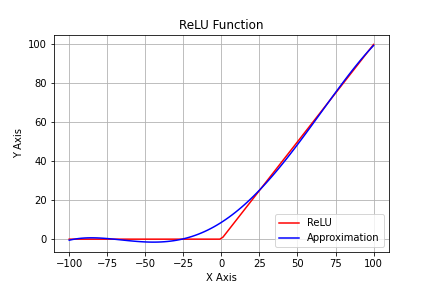}} \\
				\hline
				
				7 &\(\displaystyle [-100, 100] \) & \(\displaystyle (-6.82121026329696e-27) \times x^7 + (3.6619723132354e-11) \times x^6 + (1.03739239420975e-22) \times x^5 - (8.47927183186682e-7) \times x^4 - (4.2277292777726e-19) \times x^3 + (0.00920352084972135) \times x^2 + (0.5) \times x + 6.37244473880199 \) & \parbox[c]{1em}{
					\includegraphics[width=1in]{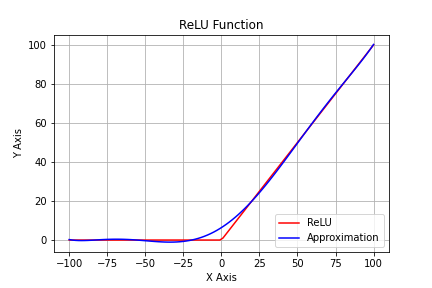}} \\
				\hline
				
				9 &\(\displaystyle [-100, 100] \) & \(\displaystyle (1.12777343019843e-29) \times x^9 - (7.03111115816644e-15) \times x^8 - (2.35559127759188e-25) \times x^7 + (1.87324195121527e-10) \times x^6 + (1.62231117428746e-21) \times x^5 - (1.89480875502865e-6) \times x^4 - (4.11404244005098e-18) \times x^3 + (0.0117284304779533) \times x^2 + (0.5) \times x + 5.06232562894004 \) & \parbox[c]{1em}{
					\includegraphics[width=1in]{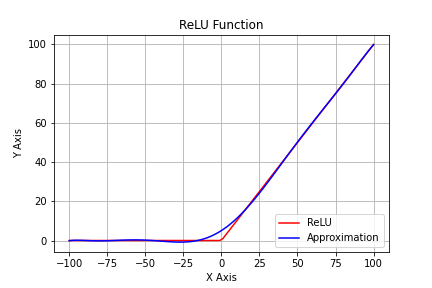}} \\
				\hline 
			\end{tabular}
		}
		\end{table*}

\begin{table*}[h!]
\centering
	\caption{Polynomial Approximation of the Sigmoid Function on Two Intervals ([-10,10], [-100,100]) using Different Degrees} 
	\label{tab:sig-polyapprox_degree}
    \scriptsize
	\scalebox{0.8}{
	\begin{tabular} {|p{9mm}|p{15mm}|>{\small}p{90mm}|p{25mm}|}  \hline
		\rowcolor{Gray}	
		Degree & Interval &Polynomial Approximation & Sigmoid Function \\ \hline
		
		3 &\(\displaystyle [-10, 10] \) &\(\displaystyle (-0.00100377373568484) \times x^3 + (1.45518367592346e-13) \times x^2 + (0.139786538317376) \times x + 0.499999999992724 \) & \parbox[c]{1em}{
			\includegraphics[width=1in]{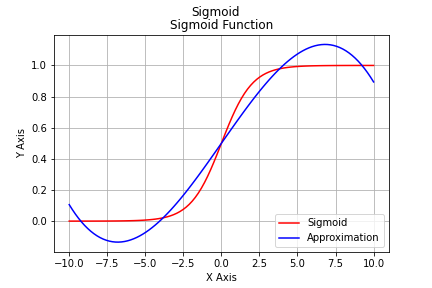}} \\ \hline 
		5 &\(\displaystyle [-10, 10] \) & \(\displaystyle (2.0467424332792e-5) \times x^5 + (5.82078097744257e-15) \times x^4 - (0.00336794817488311) \times x^3 - (5.65619279205865e-13) \times x^2 + (0.187819515164365) \times x + 0.500000000006453 \) & \parbox[c]{1em}{
			\includegraphics[width=1in]{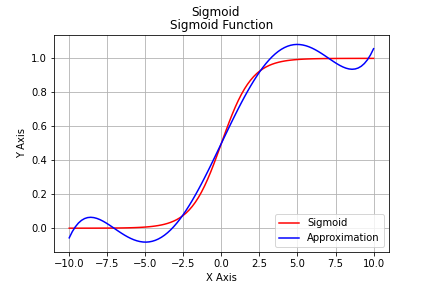}} \\ \hline 
		
		7 &\(\displaystyle [-10, 10] \) & \(\displaystyle (-4.34913635838155e-7) \times x^7 - (5.82079696725621e-16) \times x^6 + (9.18419138902492e-5) \times x^5 + (8.44014964905896e-14) \times x^4 - (0.00652613009889838) \times x^3 - (3.00134166245124e-12) \times x^2 + (0.216030242339756) \times x + 0.500000000015461 \) & \parbox[c]{1em}{
			\includegraphics[width=1in]{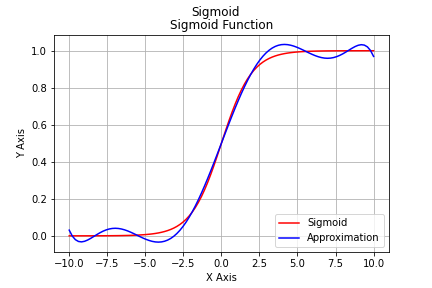}} \\ \hline 
		9 &\(\displaystyle [-10, 10] \) & \(\displaystyle (9.32721914680041e-9) \times x^9 + (1.39698499452418e-17) \times x^8 - (2.42773327147286e-6) \times x^7 - (2.60854055994519e-15) \times x^6 + (0.000229352354062705) \times x^5 + (1.47390762630842e-13) \times x^4 - (0.0097848700927233) \times x^3 - (2.49775180627496e-12) \times x^2 + (0.231624826001611) \times x + 0.500000000005353 \) & \parbox[c]{1em}{
			\includegraphics[width=1in]{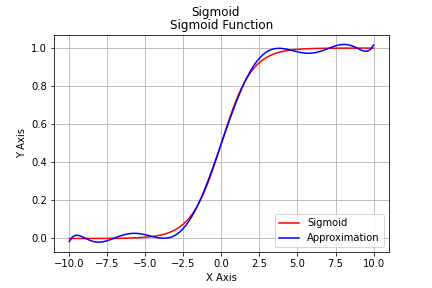}} \\ \hline 
		
		3 &\(\displaystyle [-100, 100] \) & \(\displaystyle (-1.082392200292393945e-6) \times x^3 + 6.9623566103164815542e-21\times x^2 + 0.014650756326574837423\times x + 0.49999999999999996519 \) & \parbox[c]{1em}{
			\includegraphics[width=1in]{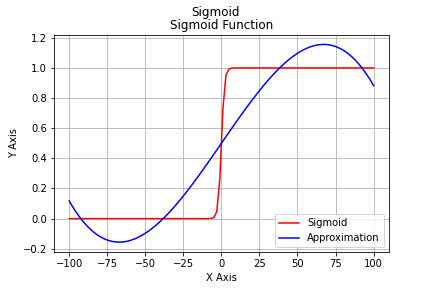}} \\ \hline 
		
		5 &\(\displaystyle [-100, 100] \) & \(\displaystyle (2.76073648126615e-10)\times x^5 + (5.82075253629132e-19)\times x^4 - (4.39372964314194e-6)\times x^3 - (5.65616515992073e-15)\times x^2 + 0.0221378748242352\times x + 0.500000000006453 \) & \parbox[c]{1em}{
			\includegraphics[width=1in]{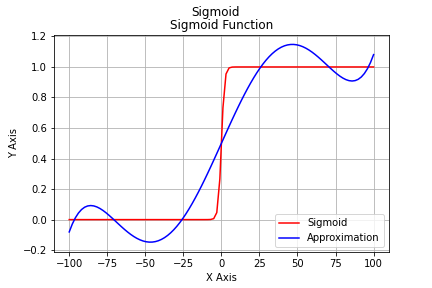}} \\
		\hline
		
		7 &\(\displaystyle [-100, 100] \) & \(\displaystyle (-8.15672916212668e-14)\times x^7 - (5.82076636528339e-22)\times x^6 + (1.66796555589344e-9)\times x^5 + (8.44011305257745e-18)\times x^4 - (1.10438386427853e-5)\times x^3 - (3.00133479250796e-14)\times x^2 + 0.029595343798993\times x + 0.500000000015461 \) & \parbox[c]{1em}{
			\includegraphics[width=1in]{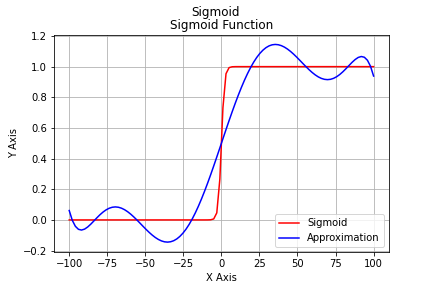}} \\
		\hline
		
		9 &\(\displaystyle [-100, 100] \) & \(\displaystyle (2.59190909648308e-17)\times x^9 + (1.3969822658824e-25)\times x^8 - (6.55008389245384e-13)\times x^7 - (2.60853504193157e-21)\times x^6 + (5.85712544194627e-9)\times x^5 + (1.47390439498889e-17)) x^4 - (2.21440994162872e-5)\times x^3 - (2.49774773767457e-14)\times x^2 + 0.0370400456474675\times x + 0.500000000005353 \) & \parbox[c]{1em}{
			\includegraphics[width=1in]{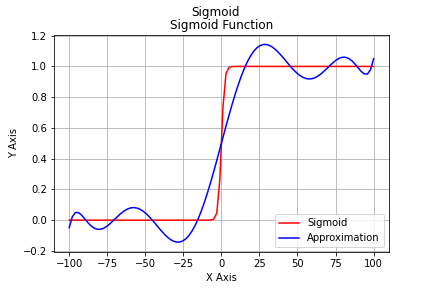}} \\
		\hline 
	\end{tabular}
}
\end{table*}

Furthermore, we performed a plethora of different experiments on the CNN model. We trained different networks by increasing the size of the Conv and the size of the filters. We noticed that changing the number of Conv and filters affects the overall accuracy of the network. As the size of the filter and layer increases, the accuracy of the network also increases. However, the efficiency of the network drops significantly. Hence, for the training phase, we considered the network given in~\autoref{fig:convolutional training}. First, we trained the CNN model using the ReLU AF. The measured accuracy for that part was~99.2\%. Then the same network was trained using the polynomial approximation function where we got an accuracy of~98.5\% -- a result that is very close to the original AF.

For comparison, we used the model proposed in CryptoNets~\cite{gilad2016cryptonets} which is similar to the one proposed in our paper -- a Conv, FC layers and an average polling layer as shown in~\autoref{tab:comp prev model}. Training the model with the ReLU AF, the accuracy of our model was~99.20\% whereas CryptoNets achieved a~99\%. Similarly, for the approximated function we obtained an accuracy of~98.5\% while CryptoNets achieved~98.95\%. For the same network, the accuracy of the model proposed in~\cite{hesamifard2018privacy} was~99.02\% using the ReLU AF and~99\% using the approximated function. 

\subsection{Performing Computation on Encrypted Data}
\label{subsec:HomoEvalCNN}
Now, we proceed by discussing how the use of HE can affect the performance of the NN model. In our work, we trained the CNN model on plaintext data while the classification was performed on the ciphertexts. As a result, we had to perform computations on two types of data -- plaintext and ciphertext. For this purpose, we used the \href{https://github.com/Lab41/PySEAL}{SEAL library} that allowed us to perform computations on ciphertext. Although the use of SEAL is straightforward, we still had to define certain parameters (see Section~\ref{subsec:Inference}). 

We performed a series of experiments using different encryption parameters. First, we looked at the polynomial modulus -- the encryption parameter used in SEAL. 
During the experiments we observed that a smaller value of polynomial modulus leads to a more efficient result but at the same time the accuracy is decreased. In contrast, a higher value of the polynomial modulus gives more accurate results, however, degrades the performance. The second encryption parameter is the coefficient modulus that decides the NB in the freshly encrypted ciphertext. This parameter is automatically set by SEAL based on the value of security level and polynomial modulus. Finally, increasing the value of the plaintext modulus, decreases the consumption of the NB.

\subsection{Comparison with the Existence Model}
\label{subsec:comparison with existence model}
Finally, we compared our results with state-of-the-art privacy-preserving NNs that utilize HE. The work proposed in CryptoNets~\cite{gilad2016cryptonets} is similar to ours. In CryptoNets, the model is trained on plaintext data and then the trained model is used for the classification of encrypted instances. In order to have a fair comparison, it is important to incorporate the same network used in both works. To this end, we used the CryptoNets model. 
Instead of using the overall performance of the model we decided to equate each layer. As can be seen in~\autoref{tab:comp prev model}, our model outperforms CryptoNets at both the encryption and decryption times as well as in the activation layer.

\begin{table}
	\centering
	\caption{Comparison with the Previous Models} 
	\label{tab:comp prev model}
 \scriptsize
	\begin{tabular}{|p{18mm}|p{25mm}|p{12mm}|p{15mm}|}
		\hline	
		\rowcolor{Gray}		
		\textbf{Layer} & \textbf{Description} & \multicolumn{2}{c|}{\textbf{Time}} \\  \hline
		\rowcolor{Gray}
		& & CryptoNets     & 
  Learning in the Dark \\ \hline
		Encryption                                                           & Encoding+Encryption                                                                & 44.5        & 8.5242      \\ \hline
		1st Conv                                                & Same except stride value                                                    & 30          & 60.36       \\ \hline
		1st AF                                              & \begin{tabular}[c]{@{}l@{}}NLAF\end{tabular} & 81          & 6.62        \\ \hline
		1st pooling layer                                                    & Mean pooling                                                             & 127         & 0.188       \\ \hline
		2nd Conv                                              & -                                                                                  & -           & 64.822      \\ \hline
		2nd AF                                              & \begin{tabular}[c]{@{}l@{}}NLAF\end{tabular} & 10          & 0.199       \\ \hline
		2nd pooling layer                                                    & -                                                                                  & -           & 0.092       \\ \hline
		\begin{tabular}[c]{@{}l@{}}1st FC layer\end{tabular}  & Generates 10 output                                                                & 1.6         & 12.1839     \\ \hline
		\begin{tabular}[c]{@{}l@{}}2nd FC layer\end{tabular} & -                                                                                  & -           & 0.326       \\ \hline
		Decryption                                                           & Image decryption                                                                   & 3           & 0.0021      \\ \hline
	\end{tabular}
\end{table}


\section{Learning in the Dark Protocol}
\label{sec:PPMLProtocol}
In the first part of this section, we formalize the communication between the user and the CSP by designing a detailed protocol. Then, we prove the security of our construction in the presence of a malicious adversary. For the rest of the section, we assume the existence of a cryptogrpahic hash function that is first and second pre-image resistant. Before we proceed to the formal description of our protocol, we present a high-level overview of our construction.


\paragraph*{\textbf{High-Level Overview}} We assume that a user $u$ wishes to classify an image in a privacy-preserving way. To this end, $u$ first outputs an image and encrypts it using an HE scheme. As a next step, $u$ sends the encrypted image to the CSP. Upon reception, the CSP commences the classification process directly on the encrypted image without the need to decrypt it. To achieve this, the CSP runs the evaluation algorithm of the HE scheme on the encrypted image and finally, outputs an encrypted vector. Each element of the vector represents the probability that the image belongs to a certain class. Finally, the CSP sends the encrypted vector back to $u$. Upon reception, $u$ decrypts the vector and classifies her image to the class that has the highest probability.  

\subsection{Construction}
\label{subsec:protocol}

As already stated in Section~\ref{sec:methodology}, we assume a client-server model. Our protocol takes part in two different phases; a \textit{Setup} phase and a \textit{Running} phase.

\paragraph*{\textbf{Setup Phase}} In the Setup phase, the user $u$ and the CSP establish a shared symmetric key $\mathsf{K}$. This key will be used to secure the communication between the two entities. Apart from that $u$ also generates a public/private key pair for a HE scheme. More specifically, $u$ executes $\mathsf{HE.KeyGen}(1^\lambda) \rightarrow (\mathsf{pk, sk})$, for some $\lambda$. We assume that upon its generation, $\mathsf{pk}$ is publicly known while $\mathsf{sk}$ remains secret.  

\paragraph*{\textbf{Running Phase}} After the successful execution of the \textit{Setup} phase, $u$ can start communicating with the CSP. To do so, $u$ first encrypts an image $img$ by running $\mathsf{HE.Enc(pk,}img) \rightarrow c_{img}$. Moreover, $u$ generates an unpredictable random number $r_1$ and sends to the CSP $m_1 = \langle r_1, c_{img}, \mathsf{HMAC}(\mathsf{K}, r_1||c_{img}) \rangle$. Upon reception, the CSP checks the freshness of the message by looking at $r_1$, and verifies the $\mathsf{HMAC}$ using the shared key $\mathsf{K}$. If any of the above verifications fail, the CSP will output $\perp$ and abort the protocol. Otherwise, the CSP proceeds with the execution of the ML model described in Section~\ref{sec:methodology}. In particular, the CSP starts running $\mathsf{HE.Eval}$ and finally, outputs an encrypted vector $c_{eval}$. The encrypted vector is then sent back to $u$ via $m_2 = \langle r_2, c_{eval}, \mathsf{HMAC(K},r_2||c_{eval}|c_{img}|)$. Upon receiving $m_2$, $u$ verifies both the freshness of the message and the $\mathsf{HMAC}$. If the verification fails, $u$ outputs $\perp$ and aborts the protocol. Otherwise, $u$ decrypts $c_v$ by running $\mathsf{HE.Dec}(\mathsf{sk}, c_{eval}) \rightarrow v$. Having the plaintext vector at her disposal, $u$ can now classify her image in accordance with the probabilities included in the vector. Our construction is illustrated in~\autoref{fig:RP}

\begin{figure}
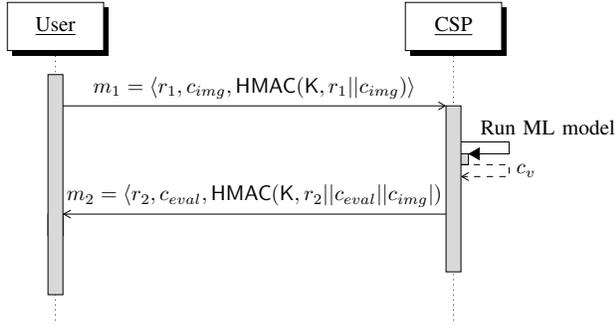

  \centering
  \scalebox{0.8}{
  \begin{sequencediagram}
    \newthread{A}{User}{}
    \newinst[5]{B}{CSP}{}
    
    \begin{messcall}{A}{$m_1 = \langle r_1, c_{img}, \mathsf{HMAC}(\mathsf{K}, r_1||c_{img}) \rangle$}{B}{}
      \begin{callself}{B}{Run ML model}{$c_v$}
	\end{callself}
	\begin{messcall}{B}{$m_2 = \langle r_2, c_{eval}, \mathsf{HMAC(K},r_2||c_{eval}||c_{img}|)$}{A}{}
	
    \end{messcall}
    \end{messcall}
  \end{sequencediagram}}
  \caption{Running Phase.}
  \label{fig:RP}
\end{figure}

\subsection{Security Analysis}
\label{subsec:security analysis}

We prove security of our protocol in presence of malicious adversary $\mathcal{ADV}$. We start by defining the threat model:

\paragraph*{\textbf{Threat Model}} Our threat model is similar to the one described in~\cite{Michalas:17:Trusted:Launch}, which is based on the Dolev-Yao adversarial model~\cite{dolev:1983}. Moreover, we extend the above threat model by defining a set of attacks available to $\mathcal{ADV}$.

\begin{myAttack}[Image Substitution Attack (ISA)] Let $\mathcal{ADV}$ be an adversary that overhears the communication between user and CSP. $\mathcal{ADV}$ successfully launches an ISA if she manages to replace encrypted image sent from  user to CSP, in a way that is indistinguishable from CSP.  
\end{myAttack}

\begin{myAttack}[Vector Substitution Attack (VSA)] Let $\mathcal{ADV}$ be an adversary that overhears the communication between the user and the CSP. $\mathcal{ADV}$ successfully launches a VSA, if she manages to replace the encrypted vector sent from the CSP to the user, in a way that is indistinguishable for the user. 
\end{myAttack}

We now proceed with proving that our protocol is secure against the defined threat model.

\begin{proposition}[Image Substitution Attack Soundness]
Let $\mathcal{ADV}$ be a malicious adversary. Then $\mathcal{ADV}$ cannot successfully launch an ISA.
\end{proposition}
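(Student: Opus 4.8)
The plan is to show that any adversary attempting an ISA must, in effect, forge a valid $\mathsf{HMAC}$ tag, which contradicts the security of the underlying MAC under the assumed first/second pre-image resistance of the hash function. First I would set up the scenario formally: the adversary $\mathcal{ADV}$ observes the message $m_1 = \langle r_1, c_{img}, \mathsf{HMAC}(\mathsf{K}, r_1||c_{img}) \rangle$ on the channel and wants to substitute $c_{img}$ with some $c_{img}' \neq c_{img}$ of her choosing, producing a message $m_1' = \langle r_1', c_{img}', \tau' \rangle$ that the CSP accepts. Acceptance by the CSP requires two checks to pass: (i) freshness of $r_1'$, and (ii) $\tau' = \mathsf{HMAC}(\mathsf{K}, r_1'||c_{img}')$.

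Next I would argue that $\mathcal{ADV}$ cannot produce the required tag. Since $\mathsf{K}$ is a symmetric key shared only between $u$ and the CSP (established during the Setup phase) and never transmitted in the clear, $\mathcal{ADV}$ does not possess $\mathsf{K}$. The only HMAC value she has seen is $\mathsf{HMAC}(\mathsf{K}, r_1||c_{img})$ on the specific string $r_1||c_{img}$. To have the CSP accept $m_1'$, she would need $\tau' = \mathsf{HMAC}(\mathsf{K}, r_1'||c_{img}')$. If she reuses $\tau' = \mathsf{HMAC}(\mathsf{K}, r_1||c_{img})$, then by the second pre-image resistance of the hash function underlying the HMAC construction, she cannot find $(r_1', c_{img}') \neq (r_1, c_{img})$ with $r_1'||c_{img}' \neq r_1||c_{img}$ yielding the same tag except with negligible probability; and if $r_1' = r_1$, the freshness check at the CSP rejects the replay. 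If instead she tries to compute a fresh tag for $r_1'||c_{img}'$ directly, this amounts to an existential forgery against the HMAC, which is infeasible without $\mathsf{K}$ for a PPT adversary. Either way, the probability that the CSP accepts a substituted image is negligible in the security parameter $\lambda$.

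I would then conclude by combining the two cases: any successful ISA induces either a second pre-image collision on the hash function or an HMAC forgery, both of which occur with at most negligible probability by assumption; hence $\mathcal{ADV}$ cannot successfully launch an ISA, which is exactly the statement of the proposition. I expect the main obstacle — or rather the main point requiring care — to be handling the freshness/replay branch cleanly: one must rule out the trivial "attack" where $\mathcal{ADV}$ simply forwards $m_1$ unchanged (not a substitution, since $c_{img}' \neq c_{img}$ is required) and, more subtly, the case where $\mathcal{ADV}$ keeps the original tag and original $c_{img}$ but changes only $r_1$, which the freshness check is precisely designed to defeat. Making explicit that the tuple $\langle r_1, c_{img}\rangle$ is bound together inside the HMAC, so that neither component can be altered independently, is the crux of the argument.
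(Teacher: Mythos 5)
Your proposal is correct and follows essentially the same route as the paper's own proof: a case split between reusing the observed tag (defeated by the freshness check and second pre-image resistance) and producing a fresh tag (an HMAC forgery, infeasible without the shared key $\mathsf{K}$), which mirrors the paper's forge-versus-replay decomposition. Your explicit remark that $r_1$ and $c_{img}$ are bound together inside the HMAC is the same observation the paper uses to close both branches.
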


\begin{proof}
For $\mathcal{ADV}$ to successfully launch an ISA, she needs to tamper with $m_1 = \langle r_1, c_{img}, \mathsf{HMAC}(\mathsf{K},r_1||c_{img})$. To do so, $\mathcal{ADV}$ has two options:

\begin{enumerate}
	\item Forge a new $m_1$ message.
	\item Replay an old $m_1$ message.
\end{enumerate}

We will show that in both cases, $\mathcal{ADV}$ can successfully launch her attack with negligible probability.

\begin{itemize}
\item Since we assume that the $\mathsf{pk}$ of the HE scheme is publicly known, $\mathcal{ADV}$ can generate a valid ciphertext $c_{img}'$ that is indistinguishable from the original $c_{img}$. As a next step, $\mathcal{ADV}$ replaces the original $c_{img}$ with the newly generated $c_{img}'$ and forwards $m_1' = \langle r_1, c_{img}', \mathsf{HMAC}(\mathsf{K}, r_1||c_{img})$ to the CSP. Upon reception, the CSP will try to verify the $\mathsf{HMAC}$. However, as $c_{img}' \neq c_{img}$ the verification will fail, and the CSP will abort the protocol. Hence, $\mathcal{ADV}$ also needs to forge a valid $\mathsf{HMAC}$. However, as $\mathcal{ADV}$ does not possess the shared key $\mathsf{K}$, this can only happen with negligible probability and thus, the attack fails.   

\item The only other alternative for $\mathcal{ADV}$, is to replay an older message. To do so, $\mathcal{ADV}$ replaces the $m_1$ message sent from the user to the CSP with an older $m_1'$ message from a previous session. Upon receiving $m_1'$, the CSP will verify the validity of the $\mathsf{HMAC}$ but it will fail to verify the freshness of the message. An alternative for $\mathcal{ADV}$, would be to generate a fresh random number and to replace the old one. However, since the random number is also included in the $\mathsf{HMAC}$, $\mathcal{ADV}$ would also need to forge a valid $\mathsf{HMAC}$. Given the fact that $\mathcal{ADV}$ does not possess the shared key $\mathsf{K}$, this can only happen with negligible probability and thus, the attack fails.
\end{itemize}

\end{proof}

\begin{proposition}[Vector Substitution Attack Soundness]
Let $\mathcal{ADV}$ be a malicious adversary. Then $\mathcal{ADV}$ cannot successfully launch a VSA.
\end{proposition}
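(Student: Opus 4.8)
The plan is to mirror the structure of the proof of the Image Substitution Attack Soundness proposition, since the message $m_2 = \langle r_2, c_{eval}, \mathsf{HMAC}(\mathsf{K}, r_2 || c_{eval} || c_{img}) \rangle$ has exactly the same protective structure as $m_1$: an unpredictable random nonce for freshness and an $\mathsf{HMAC}$ under the shared symmetric key $\mathsf{K}$ binding together the nonce and the ciphertext (plus, here, a binding to the original $c_{img}$). So I would argue that for $\mathcal{ADV}$ to succeed in a VSA she must tamper with $m_2$, and this leaves her exactly two options: (1) forge a fresh $m_2$ message, or (2) replay an old $m_2$ message from a previous session. I would then show each option succeeds only with negligible probability.

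For the forgery case, I would note that the $\mathsf{pk}$ of the HE scheme is public, so $\mathcal{ADV}$ can indeed produce a ciphertext $c_{eval}'$ that is indistinguishable from a legitimate $c_{eval}$; she then forwards $m_2' = \langle r_2, c_{eval}', \mathsf{HMAC}(\mathsf{K}, r_2 || c_{eval} || c_{img}) \rangle$ to the user. When the user recomputes the $\mathsf{HMAC}$ over $r_2 || c_{eval}' || c_{img}$, the check fails because $c_{eval}' \neq c_{eval}$, so the user aborts. Hence $\mathcal{ADV}$ would additionally need to produce a valid $\mathsf{HMAC}(\mathsf{K}, r_2 || c_{eval}' || c_{img})$; since she does not hold $\mathsf{K}$ and the $\mathsf{HMAC}$ is assumed unforgeable (built on a first- and second-preimage-resistant hash), this happens only with negligible probability, so the attack fails.

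For the replay case, $\mathcal{ADV}$ substitutes an old $m_2'$ from a prior session; the user's $\mathsf{HMAC}$ check passes, but the freshness check on $r_2$ fails because the nonce was already used. If $\mathcal{ADV}$ instead swaps in a fresh random number to satisfy freshness, the $\mathsf{HMAC}$ no longer matches, and again she would have to forge a valid $\mathsf{HMAC}$ without knowing $\mathsf{K}$, which succeeds only with negligible probability. I would also briefly observe that the inclusion of $c_{img}$ inside the $\mathsf{HMAC}$ of $m_2$ rules out a cross-session splice in which a genuine (old) response vector is re-bound to the current request, strengthening the argument. Concluding, in all cases $\mathcal{ADV}$ succeeds with at most negligible probability, so no malicious adversary can launch a VSA.

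The main obstacle is not any deep technical step — the proof is essentially a symmetric restatement of the ISA proof — but rather being careful about the extra $c_{img}$ component in the $\mathsf{HMAC}$ and making sure the replay/splice analysis covers the case where an adversary tries to pair an old, validly-authenticated $c_{eval}$ with the fresh session context. I would make explicit that because the $\mathsf{HMAC}$ binds $r_2$, $c_{eval}$, and $c_{img}$ together, no recombination of components from different sessions passes both the freshness and integrity checks without forging an $\mathsf{HMAC}$, which reduces to breaking the assumed unforgeability of the keyed hash.
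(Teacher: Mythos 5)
Your proposal is correct and follows exactly the route the paper intends: the paper simply omits this proof, stating it is ``similar to the previous one'' (the ISA proof) and relies on the $\mathsf{HMAC}$'s security and $\mathcal{ADV}$'s lack of the shared key $\mathsf{K}$, which is precisely the case analysis (forgery vs.\ replay) you carry out in full. Your additional observation that the inclusion of $c_{img}$ in the $\mathsf{HMAC}$ of $m_2$ blocks cross-session splicing is a sound refinement that goes slightly beyond what the paper states, but it does not change the approach.
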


\begin{proof}
The proof is omitted as it is similar to the previous one. More specifically, the security properties of the $\mathsf{HMAC}$, and the fact that $\mathcal{ADV}$ does not know the shared $\mathsf{K}$ are enough to ensure that $\mathcal{ADV}$ cannot successfully launch a VSA. 
\end{proof}
\noindent \textbf{Open Science and Reproducible Research:} 
To support open science and reproducible research, and provide researchers with the opportunity to use, test, and hopefully extend our work, our source code has been made available online\footnote{\href{https://gitlab.com/nisec/blind\_faith}{https://gitlab.com/nisec/blind\_faith}}.

\section{Conclusion}
\label{sec:Impact}



Undoubtedly, ML models and their underlying applications are driving the big-data economy. However, in practice, the systems using these models incorporate proxies. Many existing systems can introduce biases or rely on proxies like gender or race, leading to unfair outcomes. With this work, we aim to create a more equitable and unbiased approach to decision-making.
Learning in the Dark allows us to apply ML models directly to encrypted data so the information remains secure. We accomplished this by estimating the behavior of activation functions, which are components of ML models. Our experiments and evaluations showed promising results, demonstrating that Learning in the Dark can effectively analyze encrypted data while maintaining high accuracy.
We believe this research can inspire further advancements in privacy-preserving machine learning and contribute to systems that promote fairness, privacy, and transparency in our increasingly data-driven world.

\bibliographystyle{IEEEtran}
\bibliography{sample-base}

\end{document}